\documentclass[12pt]{amsart}
\pdfoutput=1 

\usepackage{graphicx}
\usepackage{amsmath}
\usepackage{amsfonts}
\usepackage{amsthm}
\usepackage{amssymb}
\usepackage{color} 
\usepackage[lined, vlined, norelsize]{algorithm2e} 

\usepackage{hyperref}

\newcommand{\spreadsheetlink}{
\\ \url{http://www.cs.umb.edu/~eb/windfarm/windfarm-v1-1.xlsm}}


\graphicspath{{}} 

\newcommand{\excel}[1]{{\texttt{#1}}}

\newtheorem{thm}{Theorem}
\newtheorem{theorem}[thm]{Theorem}
\newtheorem{corollary}[thm]{Corollary}

\title[Windfarm geometry]{Modeling how windfarm geometry
  affects bird mortality}

\author[E. Bolker]{Ethan\ D.\ Bolker}
\address{Departments of Mathematics and Computer Science, University
  of Massachusetts Boston, MA 02125 }
\email{eb@cs.umb.edu}

\author[J. Hatch]{Jeremy J. Hatch}
\address{Department of Biology, University of Massachusetts Boston, MA 02125}
\email{jeremy.hatch@umb.edu}

\author[C. Zara]{Catalin Zara}
\address{Department of Mathematics, University of Massachusetts
  Boston, MA 02125} 
\email{catalin.zara@umb.edu}

\date{August 6, 2014}

\begin{document}

\begin{abstract}
Birds flying across a region containing a windfarm risk death from
turbine encounters. This paper describes a geometric model that helps
estimate that risk and a spreadsheet that implements the model.
\end{abstract}

\maketitle

\section{Introduction}
After several years of controversy, Cape Wind 
will soon begin constructing a wind farm of 130 turbines, each about 100m in
diameter, spread over about 65 km$^2$ (25 square miles) in Nantucket
Sound off the coast of Massachusetts.

One component of the controversy is the potential for mortality of
birds that pass through the wind farm. This paper and the software it
describes is the result of a request from the biologist (Hatch) to the
mathematicians (Bolker and Zara) for help with some of the underlying
elementary geometry for modeling encounters with wind turbines during
such crossings. Given turbine locations, flight direction, and the
probability of a bird surviving a single passage through a turbine we
calculate the expected number of turbine encounters for each bird and
the probability of safe passage through the windfarm. To estimate
absolute mortality numbers you must combine these per bird estimates
with data about the number of birds exposed to the risk.

The most significant simplifying assumption is requiring a single
input parameter for the survival probability for a single encounter.
That number is hard to know. It depends on bird and turbine
characteristics, on bird behavior (\emph{e.g.} avoidance) and on
flight and wind speed. Band \cite{band2012} proposes a model that
predicts the probability of surviving an encounter based on 
these inputs.
Our model complements his: he pays careful attention to details at the
level of the individual birds and turbines, but does not deal with the
the arrangement of turbines in the farm. 

Our model is 
particularly straightforward mathematically. For some questions all
you need for good bounds or even exact answers (to the model) is a
calculator and the number and size of turbines. For others an Excel
spreadsheet (which we provide) does the job. 

One important insight to draw from our geometric model
is that birds passing through the wind farm turbine height, may
encounter surprisingly few turbines and that this number is 
probably greatly reduced by avoidance.
Chamberlain \emph{et.~al} 
\cite{chamberlain_effect_2006} show that Band model predictions
are much more sensitive to errors in estimating
avoidance behavior than to equivalent errors in all other
input parameters. Birds may act to
avoid both whole turbines individual blades, so may lower both the
average number of turbines encountered and the encounter mortality
probability. Since we assume no active avoidance, our mortality
estimates are likely to conservative -- that is, too high. 

Our model is generic -- it accepts turbine coordinates as input.
In this paper we apply it to a simple example that
makes the geometry and mathematics clear.  In
Section~\ref{sec:biology} we report on several studies that use it in
real situations. 

\section{The basic model}
We assume that each bird follows a \emph{path} $T$ when it flies at a
constant speed, height and heading (direction) across an area
containing a wind farm. We want to compute the mortality probability
$M(T)$ that the bird fails to survive its passage through the wind
farm. To that end, let $E(T)$ be the number of turbines the bird
encounters in its travel along $T$. Let $p$ be the probability of safe
passage through one turbine, and assume that surviving turbine
encounters are independent events. Then the probability of surviving
all the encounters is $p^{E(T)}$ so 
\begin{equation*}
    M(T) = 1 - p^{E(T)} .
\end{equation*}

The price for this simple computation is the unrealistic
assumptions we need to justify it.
The first is our requirement that paths be straight
lines. The second is our use of a single value $p$ for the survival
probability for any single encounter. In fact the value of $p$
depends not only
on the geometry of the encounter, thus on the size of the bird,
on the speed at which it crosses through a revolving rotor,
on the angle the flight path makes with the vertical plane
of the turbine and the distance from the center of the turbine at
which the encounter takes place,
but also on the active avoidance behavior of the bird.
So to use the model in any
particular case you must decide on an appropriate average value for
$p$ or, to compute an upper bound on mortality, a minimum value.

You can then use the model to
calculate the average values $\bar E$ of
$E(T)$ and $\bar M$ of $M(T)$ over an appropriate set of paths $\{ T
\}$.
Because finding and justifying a correct value for $p$ is
extremely difficult, our results are more reliable for $\bar E$ than
for $\bar M$.

Imagine a flock of birds crossing the wind farm on
the way from some distant point to some distant point. Each flies
along one path from the set $\{T\}$ of parallel paths on a particular
compass bearing $\theta$. 
We assume the paths cross
a line segment perpendicular to the line of flight uniformly
distributed along its length.
We write $\bar E = \bar E(\theta)$ and $\bar M = \bar
M(\theta)$ since both averages may depend on the angle $\theta$.


We compute averages both over paths that actually
cross the wind farm, and over paths that cross a small
circle that contains the wind farm,%
\footnote{
The circle is the smallest one containing the windfarm centered at the
average position of the turbines. It's not the smallest
circle containing the windfarm, which might have a different center,
but it is close to that circle.}
for heights uniformly distributed over a
specified vertical range.

Two figures illustrate the geometry of the small wind farm we will use
as an example.
The top view (Figure~\ref{fig:topView}) shows
four turbines with blade length $B$ meters. 
\begin{figure}[h]
\centering{
\resizebox{75mm}{!}{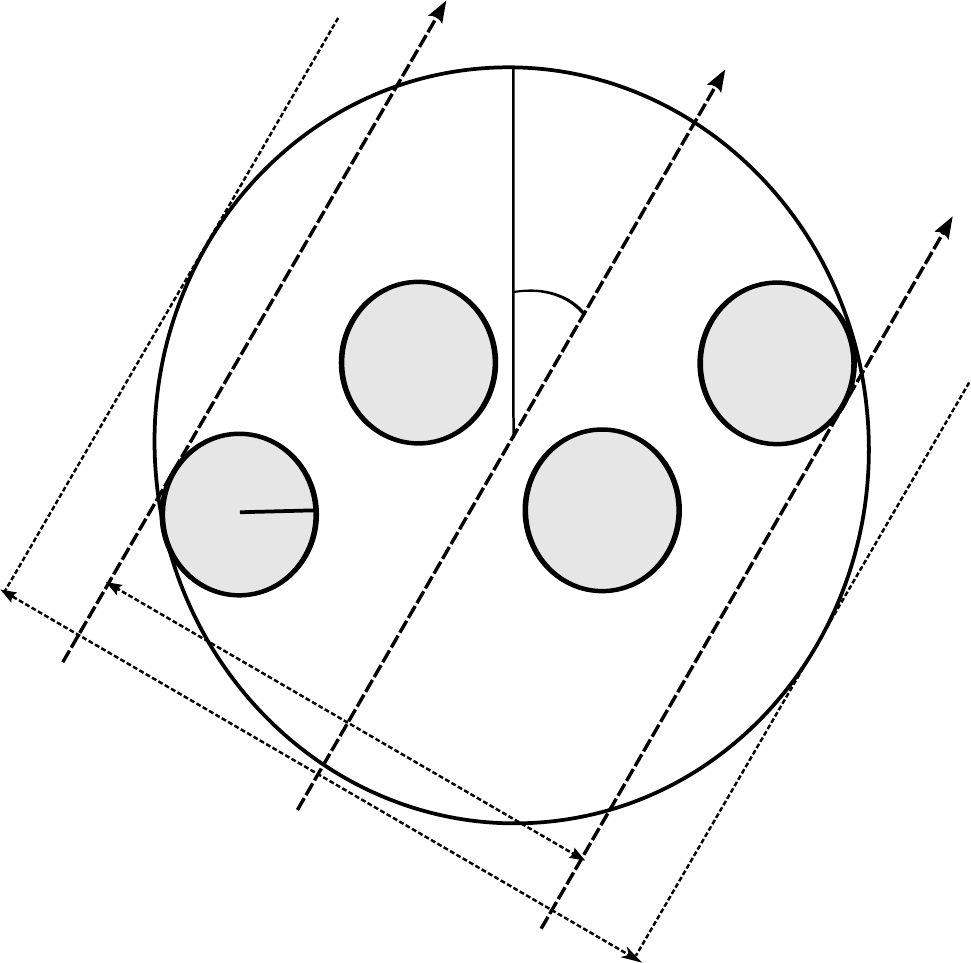}
\caption{Top view.}
\label{fig:topView}
}
\end{figure}
Turbine rotor planes can
themselves rotate about a vertical axis in order to present maximum
area to the prevailing wind; the small circles in the figure represent
the possible positions of the 
endpoints of the turbine rotors. This figure is not
drawn to scale - in a real wind farm the distance
between turbine centers would be on the order of $5$ to $15$ turbine
diameters rather than the approximately $1.5$ turbine diameters
shown. 
The dashed lines indicate the direction of flight, labeled
$\theta$, measured in degrees East of North. The small circle containing
the wind farm has radius $R$; the actual diameter of the wind farm when
birds fly on bearing $\theta$ is $L(\theta)$.

The side view (Figure~\ref{fig:sideView}) is what
a bird would see looking ahead about to enter the wind farm,
assuming for the moment that the flight is 
either upwind or downwind. Since the turbines rotate so that they
always face the wind each one appears to the bird as a circle.  White
regions correspond to paths that miss all the turbines, light gray
regions to paths that meet one and dark ones to paths that meet two.

\begin{figure}[ht]
\centering{
\resizebox{75mm}{!}{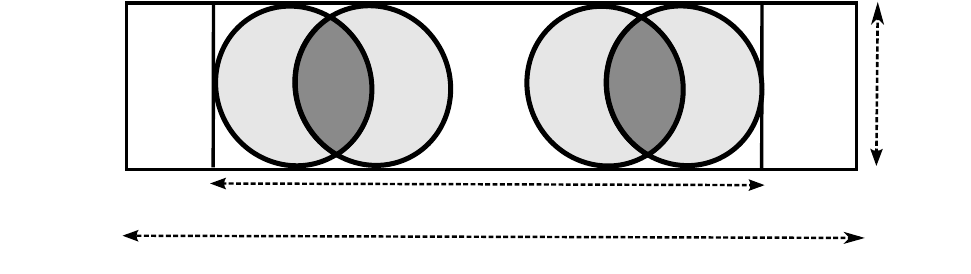}
\caption{Side view.}
\label{fig:sideView}
} 
\end{figure}

Then the average number of turbines encountered is just
\begin{equation*}
    \bar E(\theta) = \frac{2 \times  \text{dark area}
            + 1 \times \text{gray area}}
         {\text{total area}}
\end{equation*}
where the $ {\text{total area}}$ of the rectangle is either
$2BL(\theta)$ or $4BR$ depending on the set of paths you are
interested in. 

The average mortality probability is
\begin{equation}\label{eq:mortality}
    \bar M(\theta) = 1 - \frac{p^2 \times  \text{dark area}
            + p \times \text{gray area}
            + \text{white area}}
         {\text{total area}}
\end{equation}

It is clear that these areas and their generalizations for more than
two encounters can all be computed exactly using elementary
geometry and trigonometry. Our spreadsheet does that with an efficient
algorithm that runs in time proportional to the number of turbines.
That efficiency is possible because we assume $p$ is
independent of where a bird crosses a turbine. Chamberlain \emph{et.~al.}
\cite{chamberlain_appraisal_2005} discuss a similar model which requires 
numerical integration and the need to ``adjust for overlapping
rotors'' in order to deal with a more complex determination of $p$.

\section{Counting encounters, computing probabilities}

In this section we present some mathematics that shows that you can
compute $\bar E(\theta)$ and an upper bound for $\bar M(\theta)$
without needing even the elementary geometry required to calculate
the gray and dark areas in Figure~\ref{fig:sideView}.
You don't need our spreadsheet: a calculator will do. 

Imagine tacking targets (of any shape) to a dart board. Distribute
them as you wish. You may (in fact should) let them overlap. Then
suppose darts hit the dart board with a uniform distribution. Let $\bar
E$ be the average number of targets hit by a dart.

\begin{theorem}{\rm
\label{thm:avnumb}
\begin{equation*}
    \bar E = \frac{ \text{total area of targets}} {\text{area of
dart board}} .
\end{equation*}}
\end{theorem}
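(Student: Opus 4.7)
The plan is to prove this by linearity of expectation (equivalently, by Fubini / exchanging a sum and an integral), which handles overlaps automatically.

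First I would set up notation. Suppose the targets are $T_1, \dots, T_n$ with areas $a_1, \dots, a_n$, and let $A$ denote the area of the dart board. For each target $T_i$, introduce the indicator random variable $X_i$ that equals $1$ if the dart lands in $T_i$ and $0$ otherwise. Because the dart is uniformly distributed, $\mathbb{E}[X_i] = a_i / A$.

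Next I would observe that the total number of targets hit by a single dart is exactly $E = X_1 + \dots + X_n$: each hit target contributes one to the sum, regardless of whether it overlaps other targets. Taking expectations and applying linearity gives
\begin{equation*}
\bar E \;=\; \mathbb{E}[E] \;=\; \sum_{i=1}^n \mathbb{E}[X_i] \;=\; \sum_{i=1}^n \frac{a_i}{A} \;=\; \frac{a_1 + \dots + a_n}{A},
\end{equation*}
which is exactly the claimed ratio of total target area to dart board area. (Equivalently, let $n(x)$ be the number of targets covering the point $x$; then $\int_{\text{board}} n(x)\,dx = \sum_i a_i$ by swapping the integral with the sum of indicators, and dividing by $A$ gives the average.)

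There is really no hard step here; the only thing worth flagging explicitly is the conceptual point that ``total area of targets'' in the statement means the sum $\sum a_i$ with overlaps double-counted, not the measure of the union. This matches the left-hand side because $E$ counts each hit target once even when several overlap, so both sides see overlaps the same way. No elementary geometry of the overlapping regions is required, which is the punchline the authors want for the next section.
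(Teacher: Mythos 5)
Your proof is correct and is essentially the paper's own argument: the paper integrates the sum of characteristic functions $\sum_t \chi_t(x)$ over the board and swaps the sum with the integral, which is exactly your linearity-of-expectation computation (and the paper's footnote explicitly notes this probabilistic reading). Your closing remark about overlaps being double-counted on both sides matches the intent of the paper's discussion following the theorem.
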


\begin{proof}
Write $X$ for the dart board.
For each target $t$ let
let $\chi_t$ be the characteristic function of $t$. That is,
$\chi_t(x)=1$ if $x$ is in $t$ and $0$ if it is not.
A dart landing at $x$ hits
$\sum_t \chi_t(x)$ targets, so the average number of
targets hit is
\begin{align*}
\bar E & =  \frac{ 1 } { {\text{area of }X} }\int_X \sum_t \chi_t(x)dx    
=  \frac{ 1 } { {\text{area of }X} }\sum_t \int_X \chi_t(x) dx \\
      & =  \frac{ 1 } { {\text{area of }X} }\sum_t \hbox{area of }t
=  \frac{ \text{total area of targets}} {\text{area of }X} . 
\end{align*}
\end{proof}

To model a wind farm, interpret paths as darts. We used the darts
metaphor in the theorem in order to capture its true geometric
generality. We find the theorem somehow simultaneously obvious and
counterintuitive, and so think it useful and informative to provide
this proof.%
\footnote{You can also view this theorem as a corollary of the fact
  that the expected value of a sum of random variables is the sum of
  their expectations.}

Consider a wind farm with $N$ turbines with blade length $B$.
Recall that $L(\theta)$ is the ``diameter'' of the wind farm region
perpendicular to bearing $\theta$ and $R$ is the radius of the small
circle containing the wind farm. 
Suppose bird flights are perpendicular to the
rotor planes and uniformly distributed vertically between the top and
bottom of the rotors.

\begin{corollary}
\label{cor:avnumb}{\rm
For flights that actually cross the wind farm on bearing $\theta$
\begin{equation}
\label{eq:avnumbtheta}
   \bar E(\theta) = \frac{N \pi B^2}{2BL(\theta)}  = \frac{N \pi B}{2L(\theta)}
\end{equation}
independent of turbine placement.
For flights that cross the small circle
\begin{equation}
\label{eq:avnumbR}
   \bar E = \bar E(\theta) = \frac{N \pi B^2}{(2B)(2R)} = \frac{N \pi B}{4R }
\end{equation}
independent both of $\theta$ and of turbine placement.
}
\end{corollary}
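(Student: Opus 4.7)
The plan is to apply Theorem~\ref{thm:avnumb} directly to the side view in Figure~\ref{fig:sideView}: the ``dart board'' is the rectangular cross-section perpendicular to the direction of flight, and the ``targets'' are the $N$ disks representing the rotor planes as seen head-on by the bird.

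First I would justify why each turbine contributes a target of area exactly $\pi B^2$. Because flights are perpendicular to the rotor planes and the rotors are disks of radius $B$, the projection of each rotor onto the plane perpendicular to the flight direction is a full disk of radius $B$, hence has area $\pi B^2$. Summed over the $N$ turbines this gives a total target area of $N\pi B^2$, independent of where the turbines sit and independent of whether the projected disks overlap --- the theorem explicitly allows targets to overlap and counts encounters with the correct multiplicity.

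Next I would identify the two dart boards. For flights that actually cross the wind farm, the relevant rectangle has horizontal extent $L(\theta)$ (the width of the wind farm measured perpendicular to bearing $\theta$) and vertical extent $2B$ (the rotor height range over which the birds are uniformly distributed), giving area $2BL(\theta)$. For flights that cross the bounding circle, the rectangle has horizontal extent $2R$ and vertical extent $2B$, giving area $4BR$. Plugging these into Theorem~\ref{thm:avnumb} yields the two displayed formulas, and the second is manifestly independent of $\theta$ since neither $2R$ nor $2B$ nor $N\pi B^2$ depends on $\theta$.

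The one subtle step, and really the only thing worth thinking about, is confirming that the uniform distribution of paths used here matches the ``uniform distribution of darts on the dart board'' hypothesis of the theorem. Since paths are specified as uniform in the perpendicular horizontal coordinate (from the earlier assumption about parallel flights) and uniform in height across the vertical rotor range, the joint distribution of the path's intersection with the perpendicular plane really is uniform on the rectangle, so Theorem~\ref{thm:avnumb} applies verbatim. No geometry beyond ``area of a disk is $\pi B^2$'' is required, which is exactly the point the authors wish to make.
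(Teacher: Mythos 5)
Your proposal is correct and is exactly the argument the paper intends: the corollary is obtained by applying Theorem~\ref{thm:avnumb} with the $N$ projected rotor disks of area $\pi B^2$ as overlapping targets and the rectangle of area $2BL(\theta)$ (or $4BR$) as the dart board. The paper leaves this application implicit, and your explicit check that the uniform-path assumption matches the uniform-dart hypothesis is a welcome addition rather than a deviation.
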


Imagine an onshore windfarm: $N$ turbines with centers $D$ meters
apart in a line along a ridge perpendicular to the prevailing
wind. For flights with or against the wind, 
each bird meets no turbines, or just one.
Since $2R \approx L(\theta)
\approx ND $,
\begin{equation*}
   \bar E(\theta) = \frac{N \pi B}{2L(\theta)} \approx
\frac{N \pi B}{2ND} = \frac{\pi}{2} \times \frac{B}{D}.
\end{equation*}
When $B/D$ is on the order of $1/15$ to $1/7$
(\cite{meyers2012optimal}),  
$\bar E$ is on the order of $0.1$ to $0.2$. Between 10 and 20 percent
of the birds encounter a turbine.

The average $\bar E$ is likely to be less than one 
for offshore windfarms as well, even though turbines tend to be
bunched and birds can meet more than one.
To see why, imagine $N$ turbines arranged on a square grid and roughly
filling a circle. Then the radius $R$ of the circle will be
approximately $\sqrt{N/\pi}D$, where $D$ is the distance between
turbine centers along grid lines. Then Equation~\ref{eq:avnumbR} implies
\begin{equation}\label{eq:ebarapprox}
    \bar E \approx \frac{ N \pi B}{ 4\sqrt{N/ \pi}D }
           = \frac{ \pi^{3/2}\sqrt{N}}{4} \times \frac{B}{D}.
\end{equation}
For $B/D = 1/10$, $\bar E < 1$ when $N < 52$.
On average, a bird encounters less than one turbine. 
For $B/D = 1/20$, $\bar E < 1$ when $N < 207$. 
You can confirm that using the \excel{Circle} worksheet in the
spreadsheet.
The Cape Wind installation will have 130 turbines in an area roughly a
rectangle twice as wide as high, with $B/D \approx
1/16$. If they were packed in a circle Equation~\ref{eq:ebarapprox}
would yield $\bar E \approx 0.85$. The spreadsheet calculations show
$\bar E \approx 0.6$ for the actual positions of the turbines.

Unfortunately, $\bar M$ is usually 
harder to come by. One case is easy.
\begin{theorem}\label{thm:easymortality}
When birds encounter at most one turbine on bearing $\theta$
\begin{equation}\label{eq:easymortality}
\bar M(\theta) = (1-p)\bar E(\theta).
\end{equation}
\end{theorem}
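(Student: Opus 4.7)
The plan is to exploit the hypothesis pointwise on each path before averaging. Under the assumption that birds encounter at most one turbine, the random variable $E(T)$ takes only the values $0$ and $1$, so for every path $T$ we have the pointwise identity
\begin{equation*}
M(T) = 1 - p^{E(T)} = (1-p)\,E(T),
\end{equation*}
which I would verify by checking the two cases $E(T)=0$ and $E(T)=1$ separately. Averaging both sides over the parallel family of paths $\{T\}$ on bearing $\theta$ and pulling the constant $(1-p)$ out of the integral then yields $\bar M(\theta) = (1-p)\bar E(\theta)$.

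As a sanity check, I would re-derive the same formula directly from the area picture of Figure~\ref{fig:sideView} and equation~(\ref{eq:mortality}). The ``at most one encounter'' hypothesis says exactly that the dark area is zero, so white area plus gray area equals the total area. Substituting into (\ref{eq:mortality}) gives
\begin{equation*}
\bar M(\theta) = 1 - \frac{p\cdot\text{gray area} + \text{white area}}{\text{total area}}
= (1-p)\cdot\frac{\text{gray area}}{\text{total area}},
\end{equation*}
and the remaining ratio is $\bar E(\theta)$ by the case of Theorem~\ref{thm:avnumb} (or Corollary~\ref{cor:avnumb}) in which each target contributes at most one hit.

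There is no real obstacle: the only thing to notice is that ``at most one encounter'' linearizes the survival probability $p^{E(T)}$, after which the conclusion is immediate from linearity of the average. The theorem fails as soon as two-encounter paths exist because then $p^{E(T)}$ is genuinely nonlinear in $E(T)$ and the dark-area term in (\ref{eq:mortality}) carries a factor $p^2$ rather than $2p$.
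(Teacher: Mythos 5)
Your proposal is correct and is essentially the paper's own argument: the paper's two-sentence proof (the probability of an encounter is $\bar E(\theta)$, and a bird that encounters a turbine dies with probability $1-p$) is exactly your pointwise identity $1-p^{E(T)}=(1-p)E(T)$ for $E(T)\in\{0,1\}$ followed by averaging, just phrased in conditional-probability language. The area-based sanity check is a pleasant bonus but adds nothing the paper needed.
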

\begin{proof}
The probability that a bird encounters a turbine is $\bar
E(\theta)$. If it does, it dies with probability $(1-p)$.
\end{proof}

In general, 
\begin{equation}\label{eq:survivalprob}
\bar M = 1 - \frac{ 1 } { {\text{area of }X} }\int_X p^{\sum_t \chi_t(x)} dx .
\end{equation}
Fortunately, there's an
easy estimate for this integral that provides an upper bound for
mortality probability and a straightforward algorithm for computing
the integral exactly using simple geometry. We present the former
here. The latter is in the appendix and implemented in the
spreadsheet. 

\begin{theorem}
\label{th:mortprob}{\rm
\begin{equation*}
    \bar M \leq 1 - p ^ {\bar E }.
\end{equation*}}
\end{theorem}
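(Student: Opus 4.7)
The plan is to rewrite the claimed inequality so that it becomes a direct instance of Jensen's inequality. Setting $f(x) = \sum_t \chi_t(x)$ for the number of turbines encountered by a path through the point $x \in X$, Theorem~\ref{thm:avnumb} tells us that $\bar E$ is the average of $f$ over $X$, and equation~\eqref{eq:survivalprob} tells us that $1 - \bar M$ is the average of $p^{f(x)}$ over $X$. The inequality to be proved is therefore equivalent to
\begin{equation*}
    \frac{1}{\text{area of }X} \int_X p^{f(x)}\,dx \;\geq\; p^{\,\frac{1}{\text{area of }X} \int_X f(x)\,dx}.
\end{equation*}

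First I would observe that the function $g(y) = p^y = e^{y \ln p}$ is convex on $\mathbb{R}$: differentiating twice gives $g''(y) = (\ln p)^2 p^y > 0$ (this uses only that $p > 0$ and $p \neq 1$; the interesting case $0 < p < 1$ is of course what we have). Next I would apply Jensen's inequality to the convex function $g$, viewing $f$ as a random variable under the uniform distribution on $X$: the inequality $\mathbb{E}[g(f)] \geq g(\mathbb{E}[f])$ is precisely the displayed inequality above, which rearranges to $\bar M \leq 1 - p^{\bar E}$.

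There is really no substantial obstacle; the only thing worth pausing over is making sure convexity is checked in the direction that yields the correct inequality (for $0 < p < 1$ the function $p^y$ is decreasing but still convex, so Jensen points the right way). If a self-contained argument is preferred over citing Jensen, one could instead prove convexity of $g$ along the secant and use the tangent-line inequality $p^y \geq p^{\bar E} + (\ln p)\, p^{\bar E}(y - \bar E)$, then integrate both sides over $X$; the linear term integrates to zero by the definition of $\bar E$, leaving exactly the required bound.
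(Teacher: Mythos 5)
Your proof is correct and follows exactly the paper's argument: both apply Jensen's inequality to the convex function $z \mapsto p^z$, identifying $\bar E$ as the mean of $\sum_t \chi_t(x)$ and $1-\bar M$ as the mean of $p^{\sum_t \chi_t(x)}$ over $X$. Your added check of convexity via the second derivative and the tangent-line alternative are fine supplements but do not change the approach.
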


\begin{proof}
Since the function $z \rightarrow p^z$ is convex,
Jensen's inequality implies that the average value of
$p^{\sum_t \chi_t(x)}$ is at least as large as $p^{\bar E}$.
\end{proof}

There's a second estimate that's also useful because in practice, the
survival probability $p$ is quite close to $1$.

\begin{corollary}
\label{cor:p1}{\rm
When $p \approx 1$,
\begin{equation*}
\bar M \approx (1-p)\bar E \; .
\end{equation*}}
\end{corollary}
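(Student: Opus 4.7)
The plan is to start from the exact expression~\eqref{eq:survivalprob} for $\bar M$ and linearize the integrand in the small parameter $1-p$. Writing $E(x)=\sum_t \chi_t(x)$ for the (non-negative integer) number of turbines the path $x$ meets, the corollary should drop out by interchanging an average with a first-order Taylor expansion in $1-p$.

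The key algebraic step is the elementary identity
\begin{equation*}
1 - p^{E(x)} \;=\; (1-p)\sum_{k=0}^{E(x)-1} p^k,
\end{equation*}
valid pointwise in $x$ (with the empty sum when $E(x)=0$). Since $0\le p\le 1$, each $p^k$ lies in $[0,1]$ and tends to $1$ as $p\to 1$; more precisely $1-p^k\le k(1-p)$ yields
\begin{equation*}
E(x) - \tfrac{1}{2}E(x)^2(1-p) \;\le\; \sum_{k=0}^{E(x)-1} p^k \;\le\; E(x).
\end{equation*}
Because the windfarm has only $N$ turbines, $E(x)\le N$ everywhere, so the error is uniform in $x$.

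Substituting into~\eqref{eq:survivalprob}, pulling $(1-p)$ outside the integral, and using Theorem~\ref{thm:avnumb} to recognize $(1/\text{area of }X)\int_X E(x)\,dx$ as $\bar E$, I would conclude
\begin{equation*}
\bar M \;=\; (1-p)\bar E + O\bigl((1-p)^2\bigr),
\end{equation*}
which is exactly the claimed approximation for $p\approx 1$.

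There is no genuine obstacle here; the only point to verify carefully is that the linearization error is uniform in the path, which is immediate from $E(x)\le N<\infty$. I would close with the remark that this sharpens Theorem~\ref{th:mortprob}: expanding $1-p^{\bar E}=(1-p)\bar E + O((1-p)^2)$ shows that the Jensen upper bound and the true mortality agree to first order in $1-p$, so the corollary is essentially the statement that Jensen's inequality becomes tight in the regime $p\to 1$.
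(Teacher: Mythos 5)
Your proof is correct, and it reaches the same first-order conclusion as the paper but by a more careful route. The paper's proof is a one-line heuristic: it sets $q=1-p$, expands $p^\alpha=(1-q)^\alpha = 1-\alpha q + \text{lower order terms}$ with $\alpha=\sum_t\chi_t(x)$, and integrates, without quantifying the discarded terms. You instead use the exact factorization $1-p^{E(x)}=(1-p)\sum_{k=0}^{E(x)-1}p^k$ and sandwich the geometric sum between $E(x)-\tfrac12 E(x)^2(1-p)$ and $E(x)$, which buys two things the paper's argument does not make explicit: a genuinely uniform error bound (via $E(x)\le N$) giving $\bar M=(1-p)\bar E+O((1-p)^2)$ with an explicit constant, and, as a free byproduct, the \emph{exact} inequality $\bar M\le (1-p)\bar E$, a pleasant companion to the Jensen bound of Theorem~\ref{th:mortprob}. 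Your closing observation that the Jensen upper bound $1-p^{\bar E}$ and the true $\bar M$ agree to first order in $1-p$ is also correct and is consistent with the paper's remark (after the corollary) that the approximation is exact when birds encounter at most one turbine and that the neglected terms account for multiple encounters.
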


\begin{proof}
Let $q = 1-p$.  Then $q \approx 0$ and for any real number $\alpha$
\begin{equation*}
    p^\alpha = (1-q)^\alpha = 1 -\alpha q
    +  \text{ lower order terms }
     \approx 1-\alpha q .
\end{equation*}
Setting $\alpha = \sum_t
\chi_t(x)$ and integrating to compute the 
average value of the left hand side finishes the proof.
\end{proof}

Theorem~\ref{thm:easymortality} says that this approximation is exact
when birds encounter at most one turbine. The ``lower order terms''
we've ignored deal with multiple encounters.
Figure~\ref{fig:estimateMbar} shows that it's a good
estimate for Cape Wind -- a real offshore windfarm -- even with an
unreasonably low survival probability of just 0.95. The (over)estimate
smooths out the small variations in the computed mortality
probabilities as a function of wind direction that are too precise to
have any useful meaning.
\begin{figure}[h]
\centering
\includegraphics[width=8cm]{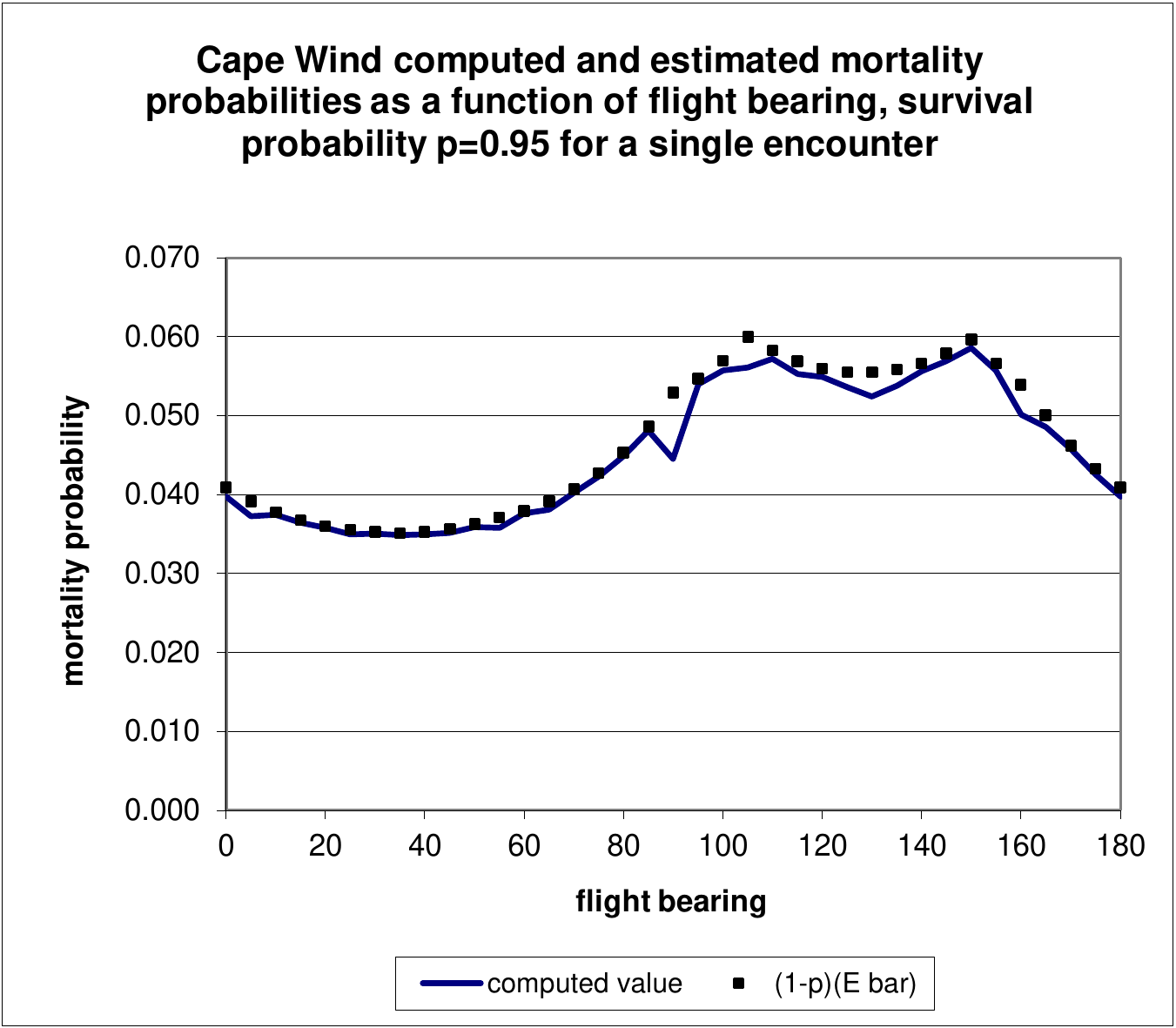}
\caption{Estimated and computed mortality probabilities.}
\label{fig:estimateMbar}
\end{figure}

\section{ Bells and whistles }

We can make the model more useful by adding a few more geometric input
parameters. First, you may specify a range of heights
at which birds are known to fly. If, for example, they tend to cross
the wind farm at an altitude between the center of the turbines and a
blade length above that center then the side view is shown in
Figure~\ref{fig:sideViewRange}. The white, gray and dark areas are
different but the computations for $\bar E(\theta)$ and $\bar
M(\theta)$ are the same.

\begin{figure}[ht]
\centering{
\resizebox{75mm}{!}{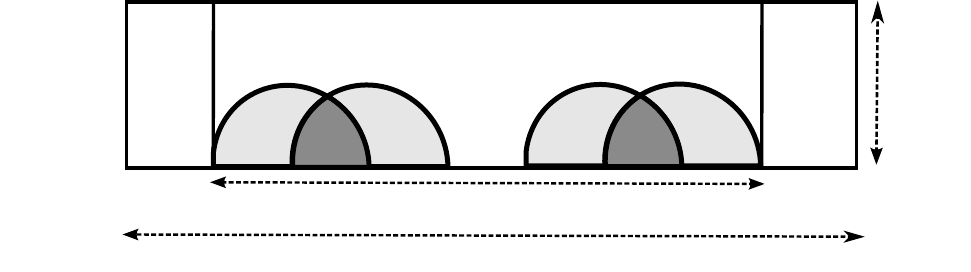}
\caption{Side view when birds tend to fly above turbine centers.}
\label{fig:sideViewRange}
} 
\end{figure}

A second feature allows you to evaluate the model when
birds fly at an angle to the wind. In that case
the birds see ellipses rather than circles. Their view is shown in
Figure~\ref{fig:sideViewEllipses}.

\begin{figure}[ht]
\centering{
\resizebox{75mm}{!}{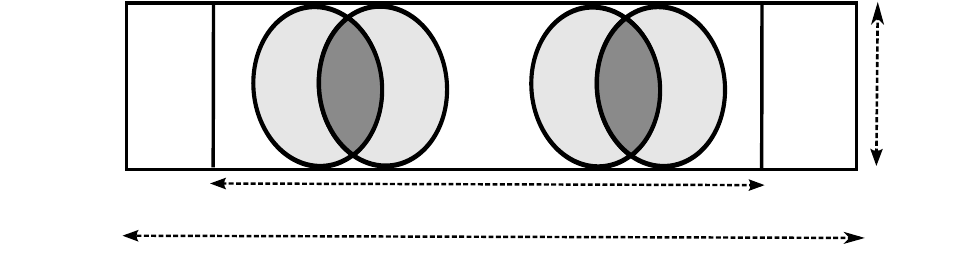}
\caption{Side view when birds fly at an angle to the wind.}
\label{fig:sideViewEllipses}
} 
\end{figure}

\section{The spreadsheet}

You will find the spreadsheet implementing our model at 
\spreadsheetlink{}. Here
we describe spreadsheet input and output and remind the user yet again
of some of our assumptions. 

\subsection*{Input}

Input goes in the yellow cells (with blue bold text) on the left in
the \excel{Main} worksheet. (Mouse over those cells to see documentation.)
Figure~\ref{fig:fourturbines} is a screen shot of that worksheet
showing values for the small four turbine wind farm we've been using
as an example. 
\begin{figure}[h]
\centering
\includegraphics[width=\textwidth]{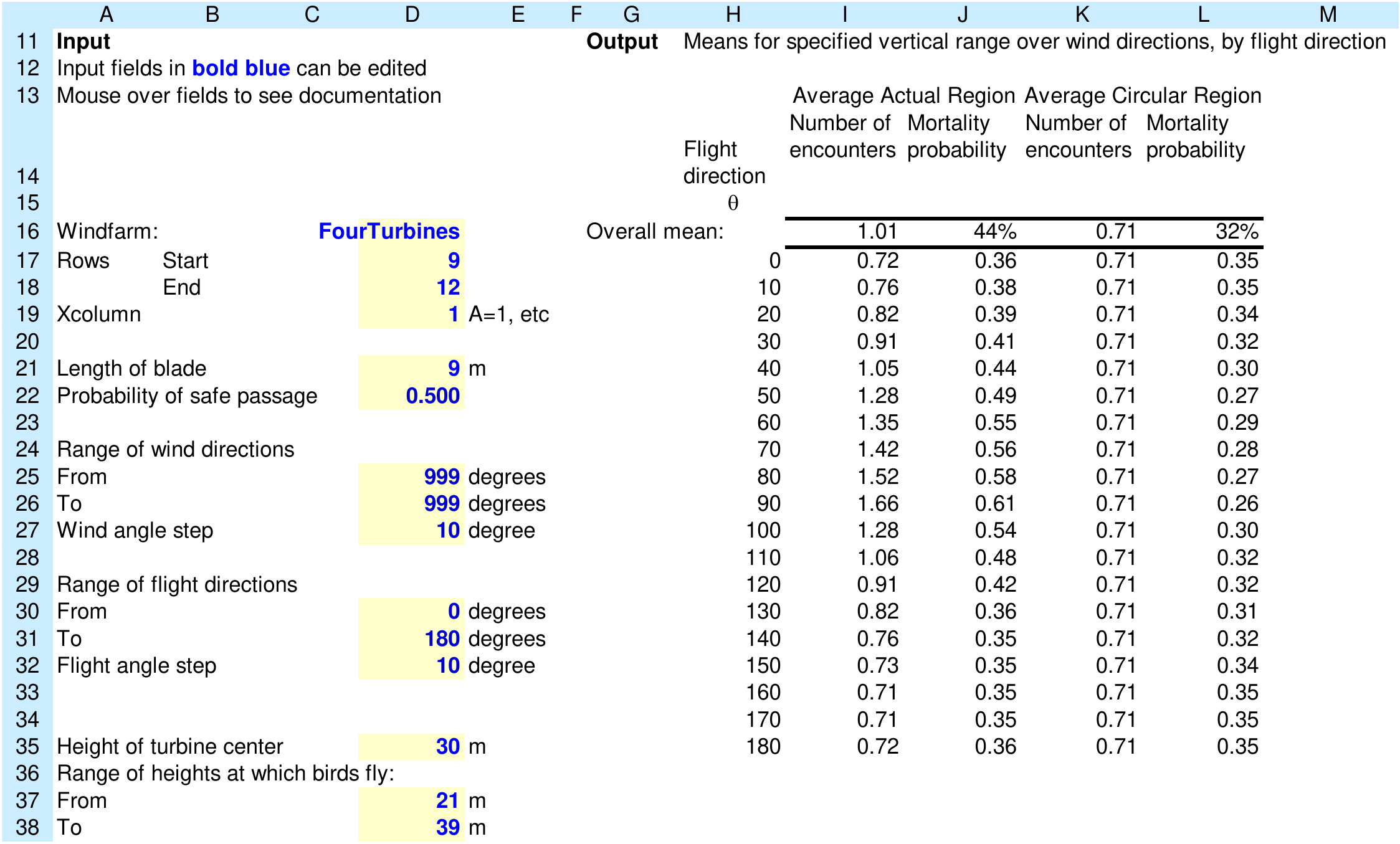}
\caption{Spreadsheet for four turbine model.}
\label{fig:fourturbines}
\end{figure}

\begin{itemize}
\item

Locations of turbines. Specify these in
a cartesian coordinate system with meters for units.
We entered them in the worksheet named \excel{FourTurbines} starting at cell
\excel{A9} and told the model that in cells \excel{D16:D19}.

To model your wind farm you may need to 
convert locations to a cartesian coordinate system. The
\excel{CapeWind} worksheet shows those calculations for the Cape Wind
farm. 

Figure~\ref{fig:fourturbinespositions} shows the locations of the four
turbines in our example. The input coordinates from the
\excel{FourTurbines} worksheet are on the right, the chart Excel drew
in in the \excel{Graphs} worksheet is on the left.

\begin{figure}[h]
\centering
\includegraphics[width=\textwidth]{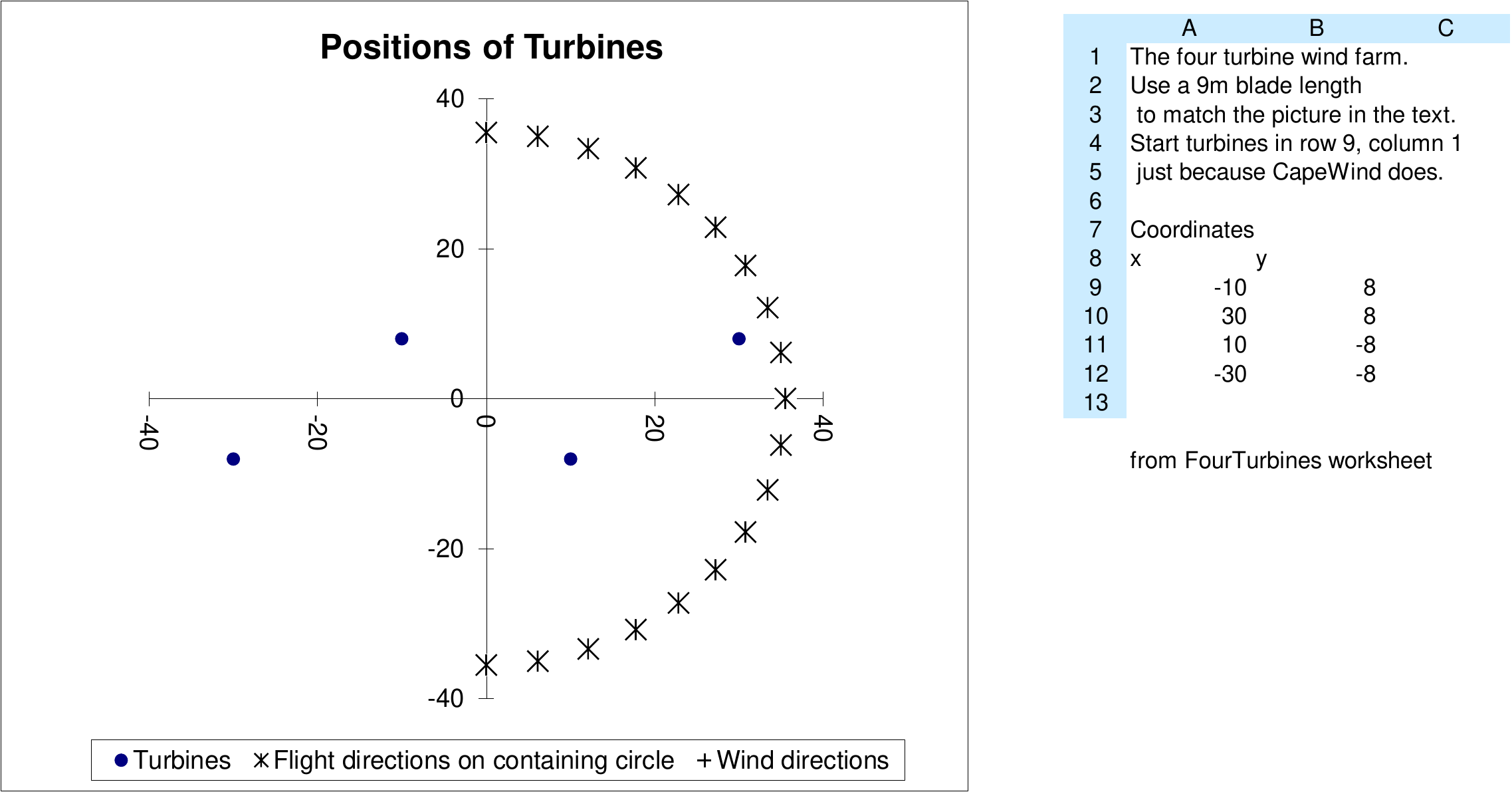}
\caption{Four turbine model coordinates.}
\label{fig:fourturbinespositions}
\end{figure}

\item
Turbine blade length $B$, in meters -- in cell \excel{D9}. That's 9m
in this example.

\item
The probability $p$ that a bird encountering a turbine survives the
encounter. We've entered $0.5$ in cell
\excel{D10}. That number is much too large for a real windfarm. We use
it here because it's easy to compute with so we will be able to check
the output of the spreadsheet by hand.

\item The height of the rotor centers, in meters -- in cell
  \excel{D35}. We use $30$m in our example.
\item
The upper and lower limits for the altitudes at which birds fly.
Setting these to \excel{-bladelength} and \excel{bladelength} respectively
leads to the side view in Figure~\ref{fig:sideViewRange}; we use
$30-9=21$ and $30+9=39$ meters.  Setting
both to 0 means that all birds fly at exactly the height of the
centers of the turbines.

\item
The compass bearings $\theta$ (in degrees east of north) for flight
paths you are interested in: minimum and maximum values and increment.
In the example we've entered $0$, $180$ and $10$ in cells
\excel{D30:D32}. 

\item
(Optional) The compass bearings $\omega$ (in degrees east of north)
for headings (directions) of the prevailing wind 
you are interested in:
minimum and maximum values and increment in cells \excel{D25:D27}.
Entering 999 as we have tells Excel to skip this computation and
assume that for the flight directions specified paths are
perpendicular to the plane in which the turbines rotate.

\end{itemize}

\subsection*{Output}

\begin{itemize}

\item
The radius of the smallest circle centered at $(0,0)$ surrounding the
wind farm, in cell \excel{D41}. For the \excel{FourTurbines} example
that's 41m. 

\item
For each flight heading $\theta$ (wind direction $\omega$)

\begin{itemize}

\item
the expected number $\bar E(\theta)$ 
of turbines encountered, and the average over $\theta$ ($\omega$).
\item
the expected probability
$\bar M(\theta)$ ($\bar M(\omega)$) of bird mortality
and the average over $\theta$ ($\omega)$.
\item
the maximum number
of encounters $E(T)$ for these paths and the corresponding
maximum mortality probability.
\end{itemize}

\item
Excel charts displaying this information.

Figure~\ref{fig:fourturbinescharts} shows the
results for flight directions in our example. You can see how they
reflect windfarm geometry: the peaks for the actual region near 90
degrees correspond to the fact that the four turbine centers are
approximately lined up West to East. The mortality values are large
because the probability of death for each encounter is an unreasonable
large 0.5. The number of encounters for the circular region is
constant (as predicted by Corollary~\ref{eq:avnumbR}). The mortality
chart dips near 90 degrees because most birds on that bearing crossing
the circle meet no turbines. Those few near the $x$-axis meet four;
they survive with probability $(1/2)^4 = 1/16$.

\begin{figure}[h]
\centering\
\includegraphics[width=\textwidth]{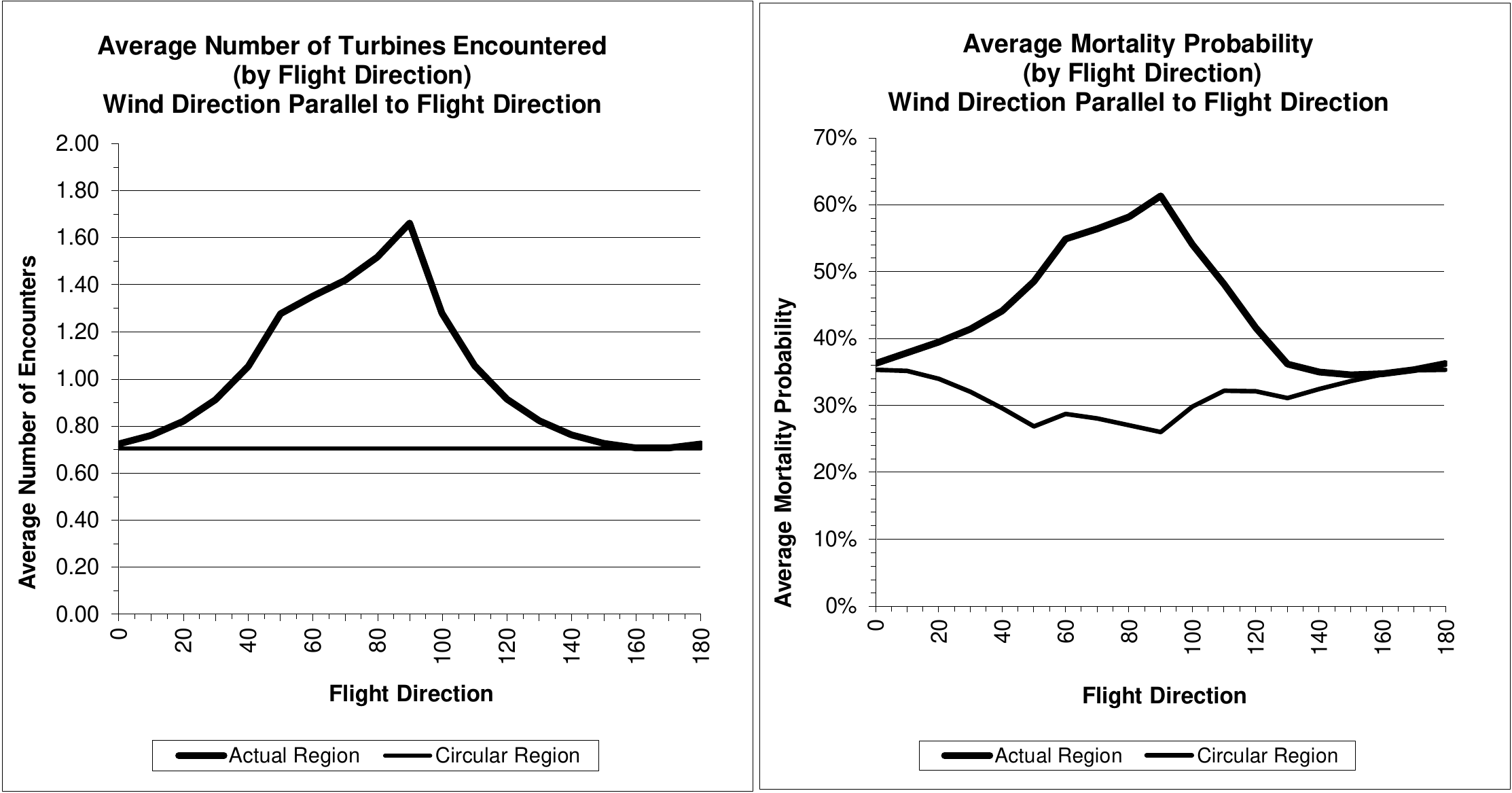}
\caption{Four turbine model output charts.}
\label{fig:fourturbinescharts}
\end{figure}

\end{itemize}

\subsection*{Tips and workarounds}

%
%
%
%
%
%
\begin{itemize}
\item
Before evaluating the model for your wind farm we
suggest you familiarize yourself with the way the model works
by playing with the four turbine example in the text and by building
your own small models in the \excel{test} worksheet in the
spreadsheet. Experiment with two or three turbines, and with survival
probabilities like $0$, $0.5$ and $1$ for which you will be able to
see that the answers are what you expect. 

\item The assumption that $p$ is constant and known in fact
  unreasonable. The model assumes that a correct average value has
  been computed for input, leaving to others the argument about how to
  compute that average. Evaluate your model several times using
  different estimates for $p$ to see how sensitive your results are to
  its value.

\item
When using the spreadsheet you should specify either a range of flight
headings and a single wind heading, or a single flight heading and
a range of wind headings. The spreadsheet will allow you to use two
ranges, but the results may be difficult to interpret.

\item
To model a nonuniform distribution of flight paths over
heights, perhaps with different mortality probabilities for each,
you can evaluate the model multiple times, once for each subrange over
which the height distribution is reasonably uniform, and combine the
results using Excel functions.

\item
To model mortality probabilities that depend on the angle
between the prevailing wind and the flight path you can evaluate the model
multiple times for restricted ranges of angles, varying the
probability as appropriate for each evaluation.

\item
To model mortality due to collisions with turbine
supports, set the blade length to the radius of the support, the
height range from 0 to 0 meters below and above the turbine center and
the wind bearing range from 999 to 999. 
The spreadsheet will then compute values for $\bar E$ and $\bar
M$ for paths at any height that might encounter the turbine supports.
The expected number $\bar E$ of encounters is likely to be small.
The value of $\bar M$ depends on the probability
$p$ of surviving an encounter, which may not be small.

\end{itemize}

\section{ Drawing Biological Conclusions }
\label{sec:biology}

Since we posted the first complete draft of this manuscript and
software in 2006 several papers have used it or referred to it. Here
we summarize some of that literature.

The results suggest (as we expected) that our model can be used to
provide a robust starting point for handling the geometry of the wind
farm but that most of the work required to estimate bird mortality is
in the biology - how many birds are there, where do they fly, and how
do they behave?

\subsection*{ The Nantucket Sound wind farm}

Jeremy Hatch and Solange Brault used our model in their analysis 
\cite{brault2007mortalities} of bird mortality for the proposed wind
farm on Nantucket Sound.

As we stressed in the introduction, modeling the probability of safe
passage through a wind farm requires two steps: estimating the number $E$
of turbines encountered , and the probability $p$ of surviving one
encounter. To estimate bird mortality in absolute terms requires a
further estimate of numbers of each bird-group of
interest at risk: those flying through the wind farm at
turbine-height.  Each component of these estimates has large
uncertainty and is likely to show great variation, much of which is
specific to the particular location and to the bird-group examined.

Hatch and Brault combine measured bird activity with robust estimation
methods for the difficult survival parameter $p$. They use Monte Carlo methods
to turn the single mortality probability estimates from our model into
mortality probability distributions, and run sensitivity analyses to
assess the importance of the estimates of each parameter.

Gordon \emph{et al} \cite{gordon}
use the Cape Wind configuration and measured
mortality from a nearby turbine at the Massachusetts Maritime Academy
to develop a more robust methodology for risk assessment:

\begin{quotation}
The Cape Wind modeling approach provides a foundation for exploring
the use of models in offshore conditions where high uncertainty
exists. The model developed by Bolker et al. (2006) is an example of a
model requiring minimal inputs, employing simple geometry and basic
probability theory to estimate avian mortality. This paper expands
upon the original work of Bolker by directly incorporating
observations of turbine avoidance behavior by terns into the published
mathematical framework. In addition, we modify the Bolker framework by
formally incorporating a risk based approach to decision making based
on the model outputs, including the use of a formal uncertainty
analysis. 
\end{quotation}

\subsection*{The Belgian Part of the North Sea}

Nicolas Vanermen and Eric W.M. Stienen \cite{vanermen_2009} studied
bird mortality for a  proposed wind farm in the Belgian Part of the
North Sea. They used our model to find a worst case estimate of the
number of turbines encountered.

%
%

\appendix
\section{Algorithms}

Here we provide an algorithm to
evaluate the integral in Equation~\ref{eq:survivalprob}, which is the
formal statement of the numerator in 
Equation~\ref{eq:mortality}.
It's an analogue of Theorem~\ref{thm:avnumb}, true for some arrangements of
targets on a dart board -- fortunately, the ones we are interested in.

Let $D_1, D_2, \ldots, D_N$ be a sequence of disks in the plane with
the same radius and collinear centers, arranged in numerical order
along the line of centers. Then for each $i$, 
\begin{equation}\label{eq:intersections}
D_i \supseteq D_i \cap D_{i+1}
\supseteq D_i \cap D_{i+2}  \ldots .
\end{equation}
This is just what you can see in
Figure~\ref{fig:sideView}. It's also true for the shaded regions in
Figure~\ref{fig:sideViewRange}. It's exactly what we need for the next
theorem.

\begin{theorem}
Let $D_1, D_2, \ldots, D_N$ be a sequence of plane regions for which
Equation~\ref{eq:intersections} is true.
Let $\chi_i$ be the characteristic function of $D_i$,
$X_i = D_i \cup D_{i+1} \cup \cdots \cup D_N$ and $X=X_1$
Then
\begin{equation}\label{eq:probaccum}
  \int_X p^{\sum_{i=1}^N \chi_i(x)} dx =
\sum_{i=1}^N \sum_{j=i}^{N} \lambda_{j-i}(p) A_{ij}\; ,
\end{equation}
where $A_{ij} = \text{Area} (D_i \cap D_j)$ and
\begin{equation}
\lambda_{r} = \lambda_{r}(p) = \begin{cases}
p, & \text{ if } r=0,\\
p^2-2p, & \text{ if } r=1, \\
p^{r-1}(p-1)^2, & \text{ if } r \geqslant 2\, .
\end{cases}
\end{equation}

\end{theorem}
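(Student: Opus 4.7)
The plan is to partition $X$ by how many consecutive disks each point belongs to, rewrite the integral as a sum over the blocks of this partition, and then use a two-dimensional inclusion-exclusion to re-express the block areas as combinations of the pairwise intersection areas $A_{ij}$. The nesting hypothesis (\ref{eq:intersections}) does double duty: it forces the partition to be indexed by intervals $[i,j]$, and it ensures that the $A_{ij}$ alone carry all the information we need.

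First I would verify that for each $x \in X$ the index set $S(x) = \{k : x \in D_k\}$ is an interval of consecutive integers. If $i < k$ both lie in $S(x)$, then $x \in D_i \cap D_k$, and the hypothesis gives $D_i \cap D_k \subseteq D_i \cap D_j$ for every $i \leq j \leq k$, so $x \in D_j$. Writing $i(x)$ and $j(x)$ for the endpoints, set $R_{ij} = \{x \in X : i(x) = i,\ j(x) = j\}$; these regions partition $X$ over $1 \leq i \leq j \leq N$, and $\sum_k \chi_k(x) = j - i + 1$ on $R_{ij}$, so
\begin{equation*}
\int_X p^{\sum_k \chi_k(x)}\, dx = \sum_{1 \leq i \leq j \leq N} p^{j-i+1}\,\text{Area}(R_{ij}).
\end{equation*}

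By the same argument, $D_i \cap D_j = \bigcap_{k=i}^{j} D_k$, so $A_{ij} = \sum_{i' \leq i,\ j' \geq j} \text{Area}(R_{i'j'})$, and a standard two-index difference inversion yields $\text{Area}(R_{ij}) = A_{ij} - A_{i-1,j} - A_{i,j+1} + A_{i-1,j+1}$, with the convention that $A$-indices outside $\{(i,j) : 1 \leq i \leq j \leq N\}$ evaluate to $0$. Substituting this and collecting the coefficient of each $A_{i_0 j_0}$ (writing $r = j_0 - i_0$), I expect the four source pairs $(i_0,j_0)$, $(i_0{+}1,j_0)$, $(i_0,j_0{-}1)$, $(i_0{+}1,j_0{-}1)$ to contribute $p^{r+1}$, $-p^{r}$, $-p^{r}$, $p^{r-1}$ respectively. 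When $r \geq 2$ all four survive and the sum is $p^{r-1}(p-1)^2 = \lambda_r$; when $r = 1$ the fourth pair is invalid (it has $i > j$) and the total is $p^2 - 2p = \lambda_1$; when $r = 0$ only the first pair is valid and gives $p = \lambda_0$.

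The only real obstacle is the bookkeeping in this last step: one must verify that the three pieces of the piecewise definition of $\lambda_r$ emerge precisely as the combinations that survive after excluding the out-of-range source pairs, and that the edge conventions $A_{0,\cdot} = A_{\cdot,N+1} = 0$ correctly handle $i_0 = 1$ and $j_0 = N$. Conceptually nothing else is deep — the nesting hypothesis has already done the real work by reducing the combinatorics of intersections to one dimension, after which the identification with $\lambda_r$ is bookkeeping.
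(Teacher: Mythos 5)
Your proof is correct, but it takes a genuinely different route from the paper's. The paper argues by peeling off one region at a time: it evaluates the difference $\int_{X_i} p^{\sum_{k\geq i}\chi_k}\,dx - \int_{X_{i+1}} p^{\sum_{k\geq i+1}\chi_k}\,dx$ using a one-dimensional layer-cake decomposition of $D_i$ into the nested sets $D_i\cap D_j$ (whose areas $A_{ij}-A_{i,j+1}$ telescope), obtains $\sum_{j\geq i}\lambda_{j-i}A_{ij}$ at each stage, and then adds the $N$ telescoping identities. You instead partition $X$ once and for all into the regions $R_{ij}$ on which the membership set is exactly the interval $[i,j]$, and recover each $\lambda_r$ as a second difference of $p^{j-i+1}$ via a two-dimensional summation by parts. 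Both arguments rest on the same structural consequence of \eqref{eq:intersections} --- every point's membership set is an interval of consecutive indices, so the pairwise areas $A_{ij}$ carry all the information --- but your version states that fact explicitly as a lemma (including $D_i\cap D_j=\bigcap_{k=i}^{j}D_k$) and treats the two indices symmetrically, at the cost of more careful bookkeeping in the inversion $\text{Area}(R_{ij})=A_{ij}-A_{i-1,j}-A_{i,j+1}+A_{i-1,j+1}$ and its boundary conventions; the paper's version keeps the interval structure implicit in the nested chain and needs only one-dimensional telescoping at each stage. Your accounting of which of the four source pairs survive for $r=0$, $r=1$, and $r\geq 2$ correctly reproduces the three cases in the definition of $\lambda_r$, so the piecewise formula emerges for the same boundary reasons in both proofs.
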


\begin{proof}
\begin{align*}
  \int_X p^{\sum_{i=1}^N \chi_i(x)} &\;  dx -
  \int_{X_2} p^{\sum_{i=2}^N \chi_i(x)}\; dx =  p(A_{11}-A_{12}) \\ 
& + (p^2-p)(A_{12}-A_{13})+ (p^3-p^2)(A_{13}-A_{14}) \\
& + \dotsb + (p^{k-1}-p^{k-2})(A_{1,k-1}-A_{1,k})+ (p^N-p^{N-1})A_{1,N}\\
& = pA_{11} + (p^2-2p)A_{12}+ (p^3-2p^2+p)A_{13} + \dotsb \\
& = \lambda_{0}A_{11} + \lambda_{1} A_{12} 
+ \lambda_{2}A_{13}+ \dotsb + \lambda_{k-1}A_{1N}  = \sum_{j=1}^{N} \lambda_{j-1}A_{1j}\; .
\end{align*}
Similarly
\begin{align*}
  \int_{X_2} p^{\sum_{i=2}^N \chi_i(x)} dx -
  \int_{X_3} p^{\sum_{i=3}^N \chi_i(x)} dx & = \sum_{j=2}^{N}   \lambda_{j-2}A_{2j} \\
  \int_{X_3} p^{\sum_{i=3}^N \chi_i(x)} dx  -
  \int_{X_4} p^{\sum_{i=4}^N \chi_i(x)} dx & = \sum_{j=3}^N \lambda_{j-3}A_{3j} \\
\vdots \qquad  & \quad \vdots \\
  \int_{X_N} p^{\sum_{i=N}^N \chi_i(x)} dx & = \sum_{j=N}^N \lambda_{j-N}A_{Nj} \\
\end{align*}
Adding these telescoping equations leads to \eqref{eq:probaccum}.
\end{proof}

The last piece is the computation of the area of the intersections of
the regions $D_i \cap D_j$. The $D_i$ are circles when the flight
direction is up or downwind. When the flight is at an angle to the wind,
the $D_i$ are ellipses obtained by compressing circles along a
diameter, with the same compression factor for each. Then we can find
the areas of the intersections by decompressing to circles, finding
those areas and then compressing the results.
 
Suppose $C_1$ and $C_2$ are circles of radius $R$, with centers
$\text{O}_1$ and $\text{O}_2$ situated at a distance
$\text{O}_1\text{O}_2 = 2d \leqslant 2R$. Let $D_1$ 
and $D_2$ be the segments of the disks bounded by lines parallel to
the line of the centers, at distance $a$ and $b$ from the line of
centers; these distances are positive if the line is above the line of
centers and negative otherwise. The area of the intersection $D_1 \cap
D_2$ can be computed as follows using the elementary geometry
illustrated in Figure~\ref{fig:intersectionArea}.

\begin{figure}[h]
\centering{
\resizebox{75mm}{!}{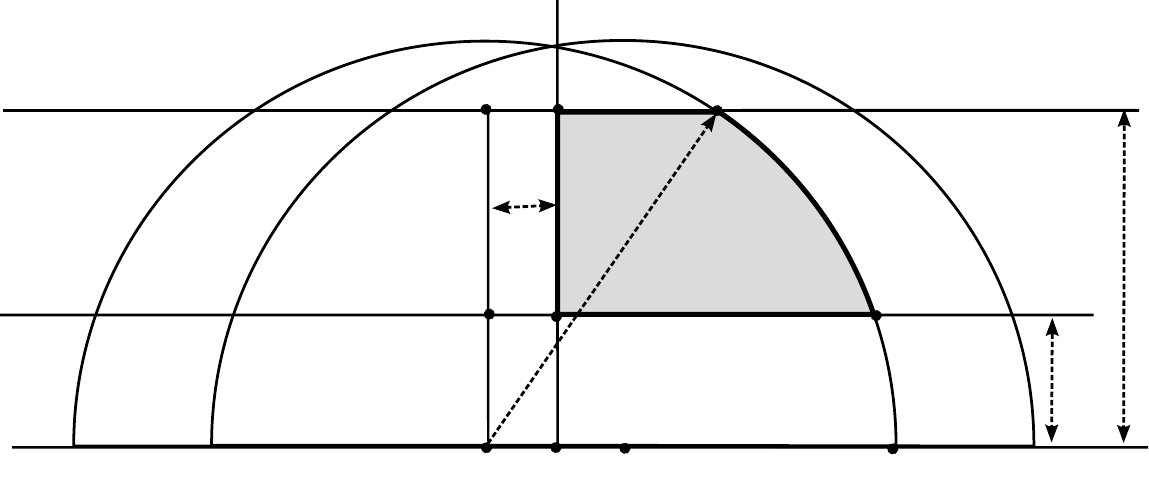}
\caption{Computing area between flight altitudes}
\label{fig:intersectionArea}
}
\end{figure}
\begin{equation*}
\text{Area} (D_1 \cap D_2) =  2\text{Area(CDFE)}
= 2 (\text{Area(CMXE)} - \text{Area(DMXF)})
\end{equation*}
and 
\begin{align*}
\text{Area(CMXE)} = & \text{Area(EO}_1\text{X)}
                    +\text{Area(AO}_1\text{E)}
                    -\text{Area(AO}_1\text{MC)} \\
= & \frac{1}{2} R^2 \arcsin{\left( \frac{a}{R}\right)} 
           + \frac{1}{2}a\sqrt{R^2-a^2} - ad \\
\text{Area(DMXF)} 
= & \frac{1}{2} R^2 \arcsin{\left( \frac{b}{R}\right)} 
           + \frac{1}{2}b\sqrt{R^2-b^2} -bd
\; .
\end{align*}
Therefore
\begin{align*}
\text{Area} (D_1 \cap D_2) = 
&  R^2 \left( \arcsin(a/R) - \arcsin(b/R) \right)  \\
& + a\sqrt{R^2-a^2} - b\sqrt{R^2 - b^2}
- 2d(a-b)\; . 
\end{align*}

The formula is valid when
\begin{equation*}
-\sqrt{R^2 - d^2}
 \leqslant b \leqslant a \leqslant 
\sqrt{R^2 - d^2} \; ;
\end{equation*}
if $a$ or $b$ lies outside that interval, replace it by the nearest
endpoint of the interval.

Here is the pseudocode for the Excel macro implementing that
algorithm in the spreadsheet. \verb!Area! computes the area of 
each intersection of elliptic sections. The inner loop terminates
prematurely as soon as the intersection is empty. 

\begin{algorithm}[H]
\DontPrintSemicolon
\SetAlgoLined
\SetKwRepeat{Repeat}{repeat}{until}%
\SetKwFunction{Area}{Area}
\SetKwFunction{lambda}{lambda}

/* Read input parameters */\;
\begin{itemize}
\item Location of turbines\;
\item Geometry of turbines (rotor height, blade length)\;
\item Range of flight directions (start, end, step)\;
\item Range of wind directions (start, end, step)\;
\item Probability of safe passage through one turbine\;
\end{itemize}

/* Normalize coordinates of turbines*/\; 
Compute centroid of wind turbines\;
Compute radius of enclosing region\;

\For{each flight direction}{
Project turbines along the flight direction\;
Sort coordinates of projections\;
\For{each wind direction}{
$ProbAccum \longleftarrow 0$\;
$AreaAccum \longleftarrow 0$\;
\For{$i \longleftarrow 1$ \KwTo $NumTurb$}
{
$AreaAccum \longleftarrow AreaAccum + \Area(D_i)$\;
$j \longleftarrow i$\;
\Repeat{$WedgeArea=0$ or $j = NumTurb +1$}
{
$WedgeArea \longleftarrow \Area(D_i \cap D_j)$\;
$Weight \longleftarrow \lambda(j-i, ProbSafe)$\;
$ProbAccum \longleftarrow ProbAccum + Weight * WedgeArea$\;
$j \longleftarrow j+1$\;
\If {j=i+1}
	{
	$AreaAccum \longleftarrow AreaAccum -\Area(D_i \cap D_j)$
	}
}
}
Record ProbAccum\;
Record AreaAccum
}
}
\end{algorithm}

\nocite{*}

\bibliographystyle{abbrv}      
\bibliography{windfarm_references}  

\end{document}